\newtheorem{example}{Example}
\newcommand{\systemname}[1]{\emph{#1}}
\newcommand{\alphabet}{\mathcal{A}}
\newcommand{\calphabet}{\mathcal{C}}
\newcommand{\assignment}{\mathbf{A}}
\newcommand{\cassignment}{A}
\newcommand{\tass}[1]{\mathbf{T}#1}
\newcommand{\fass}[1]{\mathbf{F}#1}
\newcommand{\Tass}{\assignment^\mathbf{T}}
\newcommand{\Fass}{\assignment^\mathbf{F}}
\newcommand{\atom}[1]{atom(#1)}
\newcommand{\head}[1]{head(#1)}
\newcommand{\body}[1]{body(#1)}
\newcommand{\dneg}{not\ }
\newcommand{\domain}[1]{dom(#1)}
\newcommand{\range}[1]{range(#1)}
\newcommand{\scope}[1]{scope(#1)}
\newcommand{\encsup}{$S$}
\newcommand{\encbou}{$B$}
\newcommand{\encran}{$R$}
\newcommand{\encbouh}[1]{\encbou$_{#1}$}
\newcommand{\encranh}[1]{\encran$_{#1}$}
\begin{document}

\title[A Translational Approach to Constraint Answer Set Solving]
{A Translational Approach to\\ Constraint Answer Set Solving}

\author[C. Drescher and T. Walsh]
{CHRISTIAN DRESCHER\thanks{Part of this work was performed when Christian Drescher was studying at the New University of Lisbon, Portugal.} \\
Vienna University of Technology, Austria
\and TOBY WALSH \\
NICTA and University of New South Wales, Australia
}

\submitted{26 January 2010}
\revised{12 May 2010}
\accepted{21 March 2010}
\pubyear{2010}

\maketitle

\begin{abstract}
We present a new approach to enhancing Answer Set Programming (ASP) with Constraint Processing techniques which allows for solving interesting Constraint Satisfaction Problems in ASP. We show how constraints on finite domains can be decomposed into logic programs such that unit-propagation achieves arc, bound or range consistency.
Experiments with our encodings demonstrate their computational impact.
\end{abstract}

\begin{keywords}
 answer set programming, constraint processing, decomposition
\end{keywords}

\section{Introduction}

Answer Set Programming (ASP; \citeNP{baral03}) has been put forward as a powerful paradigm to solve Constraint Satisfaction Problems (CSP) in \cite{niemela99a}. Indeed, ASP has been shown to be a useful in various applications, among them planning \cite{lifschitz99b}, model checking \cite{helnie03a}, and bio-informatics \cite{bachtrtrjobe04a}, and decision support for NASA shuttle controllers \cite{nobagewaba01a}. It combines an expressive but simple modelling language with high-performance solving capacities. In fact, modern ASP solvers, such as \systemname{clasp} \cite{gekanesc07b}, compete with the best Boolean Satisfiability (SAT; \citeNP{bihemawa09a}) solvers.
An empirical comparison of the performance of ASP and traditional Constraint Logic Programming (CLP; \citeNP{jama94a}) on solving CSP conducted by \citeN{dofopo05a} shows ASP encodings to be more compact, more declarative, and highly competitive, but also revealed shortcomings: non-Boolean constructs, like resources or functions over finite domains, in particular global constraints, are more naturally modelled and efficiently handled by Constraint Processing (CP; \citeNP{dechter03,robewa06a}) systems.

This led to the integration of CP techniques into ASP.
Similar to Satisfiability Modulo Theories (SMT; \citeNP{niolti06a}), the key idea of an \emph{integrative} approach is to incorporate theory-specific predicates into propositional formulas, and extending an ASP solver's decision engine for a more high-level proof procedure.
Recent work on combining ASP with CP was conducted in \cite{baboge05a,megezh08a,melgel08a} and \cite{geossc09a}. While \citeANP{melgel08a} both view ASP and CP solvers as blackboxes, \citeANP{geossc09a} embed a CP solver into an ASP solver adding support for advanced backjumping and conflict-driven learning techniques.
\citeN{ba09a} and \citeN{jaoijani09a} cut ties to ad-hoc ASP and CP solvers, and principally support global constraints.
\citeN{padoporo09a} put further emphasis on handling constraint variables with large domains, and presented a strategy which only consider parts of the model that actively contribute in supporting constraint answer sets.
However, each system has a subset of the following limitations: either they are tied to particular ASP and CP solvers, or the support for global constraints is limited, or communication between the ASP and CP solver is restricted.

This paper introduces a \emph{translational} approach to Constraint Answer Set Solving rather than an integrative one. Motivated by the success of SAT-based constraint solvers, such as the award-winning system \systemname{Sugar} \cite{tatakiba06a}, we show how to enhance ASP with Constraint Processing techniques through translation to ASP.
A first study was conducted in \citeN{gehiscth09a} with the system \systemname{xpanda} for representing multi-valued propositions in ASP.
One of the key contributions of our work is an investigation of constraint decomposition techniques in the new field of Constraint Answer Set Programming, illustrated on the popular \emph{all-different} constraint. The resulting approach has been implemented in the new preprocessor \systemname{inca}. Empirical evaluation demonstrates its computational potential.

The remainder of this paper is organized as follows. We start by giving the background notions of ASP and Constraint Satisfaction. Various generic ASP encodings of constraints on finite domains and proofs of their properties are given in Section \ref{sec:enc}. In Section \ref{sec:exp}, we empirically evaluate our approach and compare to existing research. Section \ref{sec:con} draws conclusions.

\section{Background}
\subsection{Answer Set Programming}

A \emph{(normal) logic program} over a set of primitive propositions $\alphabet$ is a finite set of \emph{rules} of the form
\[
h \leftarrow a_1 , \dots , a_m, \dneg a_{m+1} , \dots , \dneg a_n
\]
where $0 \leq m \leq n$ and $h, a_i \in \alphabet$ is an \emph{atom} for $1 \leq i \leq n$.
A \emph{literal} $\hat{a}$ is an atom $a$ or its default negation $\dneg a$.
For a rule~$r$, let $\head{r} = h$ be the \emph{head} of $r$ and $\body{r} = \{a_1 , \dots , a_m, \dneg a_{m+1} , \dots , \dneg a_n\}$ the \emph{body} of $r$. Furthermore, define $\body{r}^{+} = \{a_1 , \dots , a_m\}$ and $\body{r}^{-} = \{a_{m+1} , \dots , a_n\}$. The set of atoms occurring in a logic program $\Pi$ is denoted by $\atom{\Pi}$, and the set of bodies in $\Pi$ is $\body{\Pi} = \{ \body{r} \mid r \in \Pi \}$. For regrouping bodies sharing the same head $a$, define $\body{a} = \{ \body{r} \mid r \in \Pi,\ \head{r} = a \}$.

The semantics of a program is given by its answer sets. A set $X \subseteq \alphabet$ is an \emph{answer set} of a logic program $\Pi$ over $\alphabet$, if $X$ is the $\subseteq$-minimal model of the \emph{reduct} \cite{gellif88b}
\[
\Pi^{X} = \{ \head{r} \leftarrow \body{r}^{+} \mid r \in \Pi,\ \body{r}^{-} \cap X = \emptyset\}.
\]
The semantics of important extensions to normal logic programs, such as choice rules, integrity and cardinality constraints, is given through program transformations that introduce additional propositions (cf. \citeNP{siniso02a}).
A \emph{choice rule} allows for the non-deterministic choice over atoms in $\{h_1, \dots, h_k\}$ and has the following form:
\[
\{h_1, \dots, h_k\} \leftarrow a_1 , \dots , a_m, \dneg a_{m+1} , \dots , \dneg a_n
\]
An \emph{integrity constraint}
\[
\leftarrow a_1 , \dots , a_m, \dneg a_{m+1} , \dots , \dneg a_n
\]
is an abbreviation for a rule with an unsatisfiable head, and thus forbids its body to be satisfied in any answer set.
A~\emph{cardinality constraint}
\[
\leftarrow k \{\hat{a}_1 , \dots, \hat{a}_n\}
\]
is interpreted as no answer set satisfies $k$ literals of the set $\{\hat{a}_1 , \dots, \hat{a}_n\}$. It can be transformed into ${n \choose k}$ integrity constraints $r$ such that $\body{r} \subseteq \{\hat{a}_1 , \dots, \hat{a}_n\}$ and $|\body{r}| = k$.
\citeANP{siniso02a} provide a transformation that needs just $\mathcal{O}(nk)$ rules, introducing atoms $l(\hat{a}_i,j)$ to represent the fact that at least $j$ of the literals with index $\geq i$, i.e. the literals in $\{ \hat{a}_i, \dots, \hat{a}_n \}$,
 are in a particular answer set candidate. Then, the cardinality constraint can be encoded by an integrity constraint $\leftarrow l(\hat{a}_1,k)$ and the three following rules, where $1 \leq i \leq n$ and $1 \leq j \leq k$:
\[
l(\hat{a}_i,j) \leftarrow l(\hat{a}_{i+1},j) \qquad\quad
l(\hat{a}_i,j+1) \leftarrow \hat{a}_i, l(\hat{a}_{i+1},j)\qquad\quad
l(\hat{a}_i,1) \leftarrow \hat{a}_i
\]
Notice that both transformations are modular. Alternatively, modern ASP solvers also incorporate propagators for cardinality constraints that run in $\mathcal{O}(n)$.

\subsection{Nogoods}

We want to view inferences in ASP as unit-propagation on nogoods. 
Following \citeN{gekanesc07a}, inferences in ASP rely on atoms and program rules, which can be expressed by using atoms and bodies. Thus, for a program~$\Pi$, the \emph{domain} of Boolean assignments~$\assignment$ is fixed to $\domain{\assignment} = \atom{\Pi} \cup \body{\Pi}$.

Formally, a Boolean \emph{assignment} $\assignment$ is a set $\{ \sigma_1, \dots, \sigma_n \}$ of \emph{signed literals}~$\sigma_i$ for $1 \leq i \leq n$ of the form $\tass{a}$ or $\fass{a}$ where $a \in \domain{\assignment}$. $\tass{a}$ expresses that $a$ is assigned \emph{true} and $\fass{a}$ that it is \emph{false} in $\assignment$. (We omit the attribute \emph{Boolean} for assignments whenever clear from the context.) The complement of a signed literal~$\sigma$ is denoted by $\overline{\sigma}$, that is $\overline{\tass{a}} = \fass{a}$ and $\overline{\fass{a}} = \tass{a}$.
In the context of ASP, a \emph{nogood} \cite{dechter03} is a set $\delta = \{ \sigma_1, \dots, \sigma_n \}$ of signed literals, expressing a constraint violated by any assignment~$\assignment$ such that $\delta \subseteq \assignment$.
For a nogood $\delta$, a signed literal $\sigma \in \delta$, and an assignment $\assignment$, we say that $\delta$ is \emph{unit} and $\overline{\sigma}$ is \emph{unit-resulting} if $\delta \setminus \assignment = \{\sigma\}$.
Let $\Tass = \{ a \in \domain{\assignment} \mid \tass{a} \in A \}$ the set of true propositions and $\Fass = \{ a \in \domain{\assignment} \mid \fass{a} \in A \}$ the set of false propositions. A \emph{total} assignment, that is $\Tass \cup \Fass = \domain{\assignment}$ and $\Fass \cup \Tass = \emptyset$, is a \emph{solution} for a set $\Delta$ of nogoods if $\delta \not\subseteq \assignment$ for all $\delta \in \Delta$.

As shown in \citeN{lee05a}, the answer sets of a logic program $\Pi$ correspond to the models of the completion of $\Pi$ that satisfy the loop formulas of all non-empty subsets of $\atom{\Pi}$. For $\beta = \{ a_1 , \dots , a_m, \dneg a_{m+1} , \dots , \dneg a_n \} \in \body{\Pi}$, define
\[
\Delta_\beta = \left\{ \begin{array}{l}
\{\tass{a_1}, \dots, \tass{a_m}, \fass{a_{m+1}}, \dots \fass{a_n}, \fass{\beta} \}, \\
\{\fass{a_1}, \tass{\beta}\}, \dots, \{\fass{a_m}, \tass{\beta}\}, \{\tass{a_{m+1}}, \tass{\beta}\}, \dots, \{\tass{a_n}, \tass{\beta}\}
\end{array} \right\}.
\]
Intuitively, the nogoods in $\Delta_\beta$ enforce the truth of body~$\beta$ iff all its literals are satisfied.
For an atom $a \in \atom{\Pi}$ with $\body{a} = \{\beta_1, \dots, \beta_k\}$, let
\[
\Delta_a = \left\{ \begin{array}{l}
\{\fass{\beta_1}, \dots, \fass{\beta_k}, \tass{a} \}, \\
\{\tass{\beta_1}, \fass{a}\}, \dots, \{\tass{\beta_k}, \fass{a}\}
\end{array} \right\}.
\]
Then, the solutions for $\Delta_\Pi = \bigcup_{\beta \in \body{\Pi}} \Delta_\beta \cup \bigcup_{a \in \atom{\Pi}} \Delta_a$ correspond to the models of the completion of $\Pi$. Loop formulas, expressed in the set of nogoods~$\Lambda_\Pi$, have to be added to establish full correspondence to the answer sets of $\Pi$.
Typically, solutions for $\Delta_\Pi \cup \Lambda_\Pi$ are computed by applying \emph{Conflict-Driven Nogood Learning} (CDNL; \citeNP{gekanesc07a}). This combines search and propagation by recursively assigning the value of a proposition and using unit-propagation (Fig. \ref{alg:up}) to determine logical consequences of an assignment \cite{mitchell05a}.
\begin{figure}
\figrule
\[
\begin{array}{ll}
$\textbf{Input}:$ & $A set $\nabla$ of nogoods, and an assignment $\assignment$.$\\
$\textbf{Output}:$ & $An extended assignment, and a status (either $conflict$ or $success$).$\\
\end{array}
\]
\[
\begin{array}{l}
$\textbf{repeat}$\\
\qquad$\textbf{if} $\delta \subseteq \assignment $ for some $ \delta \in \nabla $ \textbf{then}$\\
\qquad\qquad$\textbf{return} $(\assignment,conflict)$;$\\
\qquad\Sigma\leftarrow\{\delta\in\nabla\mid\delta\setminus\assignment=\{\sigma\},\overline{\sigma}\not\in\assignment\}$;$\\
\qquad$\textbf{if} $\Sigma\neq\emptyset$ \textbf{then let} $\sigma\in\delta\setminus\assignment$ for some $\delta\in\Sigma$ \textbf{in}$\\
\qquad\qquad\assignment\leftarrow\assignment\cup(\overline{\sigma})$;$\\
$\textbf{until} $\Sigma=\emptyset $;$\\
$\textbf{return} $(\assignment,success)$;$
\end{array}
\]
\figrule
\vspace{-1\baselineskip}
\caption{The unit-propagation algorithm.}\label{alg:up}
\end{figure}
\begin{example}
Consider the set of nogoods $\nabla = \{ \{\tass{a_1}, \fass{a_2}, \tass{a_3}, \tass{a_4}\}, \{\fass{a_1}, \tass{a_4}\}$, $\{\fass{a_3}, \tass{a_4}\} \}$ and the assignment $\assignment = \{\tass{a_4}\}$. Unit-propagation extends $\assignment$ by $\{\tass{a_1}$, $\tass{a_2}$, $\tass{a_3}\}$.
\end{example}

\subsection{Constraint Satisfaction and Consistency}

A \emph{Constraint Satisfaction Problem} is a triple $(V,D,C)$ where $V$ is a set of \emph{variables} $V = \{v_1, \dots , v_n\}$, $D$ is a set of finite \emph{domains} $D=\{D_1, \dots , D_n\}$ such that each variable~$v_i$ has an associated domain $\domain{v_i} = D_i$, and $C$ is a set of \emph{constraints}. Following \citeN{robewa06a},
a constraint~$c$ is a pair~$(R_S,S)$ where $R_S$ is a $k$-ary \emph{relation} on the variables in $S \subseteq V^k$, called the \emph{scope} of $c$. In other words, $R_S$ is a subset of the Cartesian product of the domains of the variables in $S$. To access the relation and the scope of $c$ define $\range{c} = R_S$ and $\scope{c} = S$. For a \emph{(constraint variable) assignment} $\cassignment : V \to \bigcup_{v \in V} dom(v)$ and a constraint $c = (R_S, S)$ with $S = (v_1, \dots, v_k)$, define $\cassignment(S) = (\cassignment(v_1), \dots, \cassignment(v_k))$, and call $c$ \emph{satisfied} if $\cassignment(S) \in \range{c}$. A binary constraint~$c$ has $|scope(c)|=2$. For example,
$v_1 \neq v_2$ ensures that $v_1$ and $v_2$ take different
values. A global (or $n$-ary) constraint~$c$ has parametrized
scope. For example, 
the \emph{all-different} constraint ensures that
a set of variables, $\{v_1,\ldots,v_n\}$ take all different values. 
This can be decomposed into $O(n^2)$ binary
constraints, $v_i \neq v_j$ for $i<j$. However, as 
we shall see, such decomposition can hinder inference. 
An assignment~$\cassignment$ is a \emph{solution} for a CSP iff it satisfies all constraints in $C$.

Constraint solvers typically use \emph{backtracking search} to explore the space of partial assignments, and prune it by applying propagation algorithms that enforce a \emph{local consistency}
property on the constraints after each assignment.
A binary constraint~$c$ is called \emph{arc consistent} iff when a variable~$v_1 \in \scope{c}$ is assigned any value~$d_1 \in \domain{v_1}$, there exists a consistent value~$d_2 \in \domain{v_2}$ for the other variable~$v_2$.
An $n$-ary constraint~$c$ is \emph{hyper-arc consistent} or \emph{domain consistent} iff when a variable~$v_i \in \scope{c}$ is assigned any value~$d_i \in \domain{v_i}$, there exist compatible values in the domains of all the other variables~$d_j \in \domain{v_j}$ for all $1 \leq j \leq n,\ j \neq i$ such that $(d_1, \dots, d_n) \in \range{c}$.

The concepts of bound and range consistency are defined for constraints on ordered intervals.
Let $min(D_i)$ and $max(D_i)$ be the minimum value and maximum value of the domain~$D_i$. A constraint~$c$ is \emph{bound consistent} iff when a variable~$v_i$ is assigned $d_i \in \{min(\domain{v_i}), max(\domain{v_i})\}$ (i.e. the minimum or maximum value in its domain), there exist compatible values between the minimum and maximum domain value for all the other variables in the scope of the constraint. Such an assignment is called a \emph{bound support}. A constraint is \emph{range consistent} iff when a variable is assigned any value in its domain, there exists a bound support. Notice that range consistency is in between domain and bound consistency, where domain consistency is the strongest of the four formalisms.

\subsection{Constraint Answer Set Programming}

Following \citeN{geossc09a}, a \emph{constraint logic program}~$\Pi$ is defined as logic programs over an extended alphabet distinguishing regular and constraint atoms, denoted by $\alphabet$ and $\calphabet$, respectively, such that $\head{r} \in \alphabet$ for each $r \in \Pi$.
Constraint atoms are identified with constraints via a function $\gamma : \calphabet \to C$, and furthermore, define $\gamma(C') = \{ \gamma(c) \mid c \in C' \}$ for $C' \subseteq C$. For a (constraint variable) assignment~$\cassignment$ define the set of constraints satisfied by $\cassignment$ as
$
sat_C(\cassignment) = \{ c \mid \cassignment(\scope{c}) \in \range{c},\ c \in C\},
$
and the \emph{constraint reduct} as 
\[
\begin{array}{l}
\Pi^{\cassignment} = \{ \head{r} \leftarrow \body{r}|_\alphabet \mid r \in \Pi,\hfill\\ \qquad\qquad\qquad\gamma(\body{r}^{+}|_\calphabet) \subseteq sat_C(\cassignment),\ \gamma(\body{r}^{-}|_\calphabet) \cap sat_C(\cassignment) = \emptyset\}.
\end{array}
\]
Then, a set $X \subseteq \alphabet$ is a \emph{constraint answer set} of $\Pi$ with respect to $\cassignment$, if $X$ is an answer set of $\Pi^\cassignment$.

In the translational approach to Constraint Answer Set Solving, a constraint logic program is compiled into a (normal) logic program by adding an ASP decomposition of all constraints comprised in the constraint logic program. The constraint answer sets can then be obtained by applying the same algorithms as for calculating answer sets, e.g. CDNL. Since all variables will be shared between constraints, nogood learning techniques as in CDNL exploit constraint interdependencies. This can improve propagation between constraints.

\section{Encoding Constraint Answer Set Programs} \label{sec:enc}

In this section we explain how to translate constraint logic programs with multi-valued propositions into a (normal) logic program. There are a number of choices of how to encode constraints on multi-valued propositions, e.g. a constraint variable~$v$, taking values out of a pre-defined finite domain, $\domain{v}$. In what follows, we assume $\domain{v} = \lbrack 1, d\rbrack$ for all $v \in V$ to save the reader from multiple superscripts.

\subsection{Direct Encoding}

A popular choice is called the \emph{direct encoding} \cite{wa00}. In the direct encoding, a propositional variable $e(v, i)$, representing $v = i$, is introduced for each value~$i$ that can be assigned to the constraint variable~$v$. Intuitively, the proposition $e(v, i)$ is true if $v$ takes the value $i$, and false if $v$ takes a value different from $i$. For each $v$, the truth-assignments of atoms $e(v, i)$ are encoded by a choice rule (1). Furthermore, there is an integrity constraint (2) to ensure that $v$ takes at least one value, and a cardinality constraint (3) that ensures that $v$ takes at most one value.
\[
\begin{array}{lr@{\ \leftarrow\ }l}
(1) & \{ e(v, 1), \dots, e(v, d) \} & \\
(2) & & \dneg e(v, 1), \dots, \dneg e(v, d) \\
(3) & & 2\ \{ e(v, 1), \dots, e(v, d) \}
\end{array}
\]
In the direct encoding,
each forbidden combination of values in a constraint is expressed by an integrity constraint. On the other
hand, when a relation is represented by allowed combinations of values, all forbidden combinations have to be deduced and translated to integrity constraints. Unfortunately, the direct encoding of constraints hinders propagation:
\begin{theorem}[\citeNP{wa00}]
Enforcing arc consistency on the binary decomposition of the original constraint prunes more values from the variables domain than unit-propagation on its direct encoding.
\end{theorem}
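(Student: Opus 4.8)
\emph{Proof idea.} The plan is to prove two claims: (i) on every instance, every value removed by unit-propagation on the direct encoding is also removed by arc consistency on the binary decomposition; and (ii) there is an instance on which arc consistency removes strictly more. For (i), fix a starting restriction of the domains and run unit-propagation to a fixpoint on the nogoods of the direct encoding (rules (1)--(3) for every variable, and, for every binary constraint $c$ between $u$ and $v$, a nogood $\{\tass{e(u,j)},\tass{e(v,i)}\}$ for each pair $(j,i)\notin\range{c}$); we may assume condition (3) is rendered by the pairwise nogoods $\{\tass{e(v,i)},\tass{e(v,j)}\}$, since the $\mathcal{O}(nk)$ rendering propagates no more and so the bound transfers. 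In parallel, compute the arc-consistent closure of the same starting domains. I would then prove, by induction on the length of the propagation sequence, the two-sided invariant: for every variable $v$ and value $i$, (a) if $\fass{e(v,i)}$ has been derived then $i$ is not in the closure of $\domain{v}$, and (b) if $\tass{e(v,i)}$ has been derived then the closure of $\domain{v}$ is contained in $\{i\}$.

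The inductive step is a short case analysis on the unit nogood. If $\fass{e(v,i)}$ is forced by an at-most-one nogood $\{\tass{e(v,i)},\tass{e(v,j)}\}$, then $\tass{e(v,j)}$ was derived earlier and (b) gives closure of $\domain{v}\subseteq\{j\}$, which excludes $i$. If $\fass{e(v,i)}$ is forced by a conflict nogood $\{\tass{e(u,j)},\tass{e(v,i)}\}$ of a constraint $c$ with $(j,i)\notin\range{c}$, then $\tass{e(u,j)}$ was derived earlier and (b) gives closure of $\domain{u}\subseteq\{j\}$, so under $c$ the value $i$ loses all support and arc consistency removes it from $\domain{v}$. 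If $\tass{e(v,j)}$ is forced, this can only be through the at-least-one nogood $\{\fass{e(v,1)},\dots,\fass{e(v,d)}\}$, so all $\fass{e(v,k)}$ with $k\neq j$ were derived earlier and (a) gives closure of $\domain{v}\subseteq\{j\}$. The case in which a nogood becomes fully contained in the assignment (a conflict) is treated the same way and yields a domain wipe-out under arc consistency; and the choice rules (1) impose no constraint on the atoms $e(v,i)$, so no other inferences are possible. This proves (i). For (ii) it suffices to exhibit one instance: take $V=\{u,v\}$ with $\domain{u}=\domain{v}=[1,2]$ and a single binary constraint $c$ between them with $\range{c}=\{(1,1),(1,2)\}$. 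Arc consistency removes the value $2$ from both $\domain{u}$ and $\domain{v}$, whereas in the direct encoding every nogood has at least two unassigned literals on the empty assignment, so unit-propagation derives nothing.

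I expect the main obstacle to be formulating the invariant correctly rather than any individual case: the naive statement ``$\fass{e(v,i)}$ is derived iff value $i$ is pruned'' is not inductive, because unit-propagation on the direct encoding can reach a new false atom only after first deriving true atoms through the at-least-one nogoods, so the invariant must also track true atoms and tie them to singleton closures, as in (b). The remaining work is routine bookkeeping around wipe-outs and a check that the ASP rendering of ``at most one value'' does not propagate more than the pairwise clauses of the classical SAT argument --- which, as noted, it does not.
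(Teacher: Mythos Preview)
The paper does not give its own proof of this theorem: it is quoted from \citeN{wa00} and stated without argument, so there is nothing in the paper to compare your attempt against directly. That said, your plan is essentially the standard argument and is sound. The two-sided invariant in part~(i) is exactly what is needed --- tracking only false atoms is not inductive, for the reason you correctly identify --- and your case analysis covers all the nogoods arising from the direct encoding; the remark that the choice-rule nogoods and the $\mathcal{O}(nk)$ rendering of~(3) add no further propagation on the atoms $e(v,i)$ beyond the pairwise at-most-one clauses is also correct.

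There is one slip in part~(ii). With $\range{c}=\{(1,1),(1,2)\}$ on scope $(u,v)$, arc consistency removes $2$ from $\domain{u}$ only: once $\domain{u}=\{1\}$, both $v=1$ and $v=2$ remain supported by $u=1$, so $\domain{v}$ stays $\{1,2\}$. Your sentence ``arc consistency removes the value $2$ from both $\domain{u}$ and $\domain{v}$'' is therefore false. The example still establishes strictness --- arc consistency prunes one value while every nogood of the direct encoding has two unassigned literals on the empty assignment, so unit-propagation prunes nothing --- so the fix is simply to drop the claim about $\domain{v}$.
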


\subsection{Support Encoding}

The \emph{support encoding} has been proposed to
tackle this weakness \cite{gent02}. A \emph{support} for a constraint variable~$v$ to take the value~$i$ across a constraint~$c$ is the set of values $\{i_1, \dots, i_m\} \subseteq \domain{v'}$ of another variable in~$v' \in \scope{c}\setminus \{v\}$ which allow $v = i$, and can be encoded as follows, extending (1--3):
\[
\leftarrow e(v, i), \dneg e(v', i_1), \dots, \dneg e(v', i_m)
\]
This integrity constraint can be read as whenever $v = i$, then at least one of its supports must hold.
In the support encoding, for each constraint~$c$ there is one support for each pair of distinct variables $v, v' \in \scope{c}$, and for each value $i$.
\begin{theorem}[\citeNP{gent02}] \label{thm:arc}
Unit-propagation on the support encoding enforces arc consistency on the binary decomposition of the original constraint.
\end{theorem}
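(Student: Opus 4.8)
The plan is to set up a faithful dictionary between Boolean assignments closed under unit-propagation and the domain states of the binary decomposition, and then to show that this dictionary identifies the unit-propagation fixpoint with the (unique) arc-consistency closure. First I would fix the correspondence: given an assignment $\assignment$ over the direct-encoding atoms, let $\domain{v}_\assignment = \{ i \in [1,d] \mid \fass{e(v,i)} \notin \assignment \}$ be the induced CSP domain of $v$, so a value survives exactly when its atom has not been falsified. I would then record two bookkeeping facts. On the nogoods arising from rules (1)--(3), unit-propagation behaves, for our purposes, like unit-propagation on the flattened at-least-one clause (the nogood $\{\fass{e(v,1)}, \dots, \fass{e(v,d)}\}$) together with the at-most-one clauses (the nogoods $\{\tass{e(v,i)}, \tass{e(v,i')}\}$), so that at a fixpoint $\tass{e(v,i)} \in \assignment$ iff $\domain{v}_\assignment = \{i\}$, while $\domain{v}_\assignment = \emptyset$ corresponds to a conflict. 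Likewise, after eliminating its auxiliary body atom, the support integrity constraint contributes, for a constraint $c$, an ordered pair $v, v' \in \scope{c}$ and a value $i$, the nogood $\delta_{c,v,v',i} = \{\tass{e(v,i)}, \fass{e(v',i_1)}, \dots, \fass{e(v',i_m)}\}$, where $\{i_1, \dots, i_m\}$ — the support of $v=i$ across $c$ — is precisely the set of values for $v'$ compatible with $v=i$ in the binary decomposition of $c$.

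The argument then splits into two halves. For the ``$\supseteq$'' direction (the fixpoint is arc consistent), assume $\assignment$ is closed under unit-propagation and conflict-free, and suppose for contradiction that some $i \in \domain{v}_\assignment$ has no support in $\domain{v'}_\assignment$ for some $c$ with $v' \in \scope{c}$; then every $e(v',i_j)$ is false in $\assignment$, so all literals of $\delta_{c,v,v',i}$ except possibly $\tass{e(v,i)}$ lie in $\assignment$, and since $i \in \domain{v}_\assignment$ we have $\fass{e(v,i)} \notin \assignment$, whence $\delta_{c,v,v',i}$ is either violated (contradicting conflict-freeness) or unit with unit-resulting literal $\fass{e(v,i)}$ (contradicting closure). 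Hence the induced domain state is arc consistent on the binary decomposition. For the ``$\subseteq$'' direction (nothing is pruned that arc consistency would keep) I would run a simultaneous induction along a unit-propagation trace, maintaining the invariants that (a) $\fass{e(v,i)} \in \assignment$ implies the arc-consistency closure removes $i$ from $\domain{v}$, and (b) $\tass{e(v,i)} \in \assignment$ implies the arc-consistency closure of $\domain{v}$ equals $\{i\}$. A support nogood $\delta_{c,v,v',i}$ can only produce $\fass{e(v,i)}$ once all $i_j$ are falsified, in which case by (a) and the projection identity $v=i$ has lost all support and arc consistency deletes $i$; it can only produce a positive literal $\tass{e(v',i_k)}$ once $\tass{e(v,i)} \in \assignment$ — so $\domain{v}$ closes to $\{i\}$ by (b) — and all other $i_j$ are falsified, in which case arc consistency in the direction from $v$ to $v'$ forces the closure of $\domain{v'}$ into $\{i_1,\dots,i_m\}$ and hence to $\{i_k\}$, and rule (2) legitimately fixes it; the inferences through rules (1)--(3) merely reflect singleton domains, so the invariants persist. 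Combining the two halves, the domain state read off the unit-propagation fixpoint is arc consistent and contains the arc-consistency closure, hence equals it.

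I expect the main obstacle to be the ``$\subseteq$'' half, and within it the case in which a support nogood propagates a positive literal $\tass{e(v',i_k)}$: one must check that the ensuing cascade through the at-most-one and at-least-one nogoods of rules (1)--(3) never falsifies a value that arc consistency would retain, which is exactly where invariant (b) linking true atoms to singleton arc-consistent domains carries the weight, together with the easy observation that the arc-consistency closure never empties a domain except when the binary decomposition is itself inconsistent — a situation that the dictionary above matches with a unit-propagation conflict. The ``$\supseteq$'' half, by contrast, is routine once the body-atom nogoods have been flattened.
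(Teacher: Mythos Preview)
The paper does not give its own proof of this theorem: it is stated with attribution to \citeN{gent02} and then used as a black box (the subsequent corollary about the \emph{all-different} support encoding simply invokes it). So there is nothing in the paper to compare your argument against.

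As a standalone proof your two-sided fixpoint argument is essentially the standard one and is sound. The ``$\supseteq$'' direction is exactly right: at a conflict-free unit-propagation fixpoint, an unsupported value $i \in \domain{v}_\assignment$ makes $\delta_{c,v,v',i}$ unit or violated, contradiction. For the ``$\subseteq$'' direction your trace induction with invariants (a) and (b) works, and you have correctly identified the one nontrivial case --- a support nogood propagating a \emph{positive} literal $\tass{e(v',i_k)}$ --- and handled it via (b) plus the observation that an emptied arc-consistency closure must surface as a unit-propagation conflict. Two places would need tightening in a full write-up: first, the ``flattening'' step where you replace the ASP-level nogoods from rules (1)--(3) and from the integrity-constraint bodies by the bare clauses $\{\fass{e(v,1)},\dots,\fass{e(v,d)}\}$, $\{\tass{e(v,i)},\tass{e(v,i')}\}$, and $\delta_{c,v,v',i}$ is correct but deserves an explicit sentence (the auxiliary body atoms are functionally determined and neither block nor add inferences on the $e(\cdot,\cdot)$ atoms); second, the conflict/empty-domain correspondence should be stated as a separate lemma rather than folded into the final sentence, since both halves of the argument rely on it.
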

We illustrate this approach on an encoding of the global \emph{all-different} constraint. For variables $v, v'$ and value $i$ it is defined by the following $\mathcal{O}(n^2d)$ integrity constraints:
\[
\leftarrow e(v, i), \dneg e(v', 1), \dots, \dneg e(v', {i-1}), \dneg e(v', {i+1}), \dots, \dneg e(v', d)
\]
To keep the encoding small, we make use of the following equivalence (e)
\[
e(v', i) \equiv \dneg e(v', 1), \dots, \dneg e(v', {i-1}), \dneg e(v', {i+1}), \dots, \dneg e(v', d)
\]
covered by (2--3) and get
\[
\leftarrow e(v, i), e(v', i).
\]
Observe, that this is also
the direct encoding of the binary decomposition of the global \emph{all-different} constraint. However, this observation does not hold in general for all constraints.
As discussed in the Background section of this paper, we can express above condition as $\mathcal{O}(d)$ cardinality constraints:
\[
\begin{array}{lr@{\ \leftarrow\ }l}
(4) & & 2\ \{ e(v_1, i), \dots, e(v_n, i) \}
\end{array}
\]

\begin{corollary}
Unit-propagation on (1--4) enforces arc consistency on the binary decomposition of the global \emph{all-different} constraint in $\mathcal{O}(nd^2)$ down any branch of the search tree.
\end{corollary}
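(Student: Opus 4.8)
The plan is to split the statement into a propagation-strength claim and a running-time claim, reducing the first to Theorem~\ref{thm:arc} and settling the second by counting propagation steps along a branch.

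For the propagation part, recall that the binary decomposition of \emph{all-different} on $v_1,\dots,v_n$ is the set of constraints $v_a \neq v_b$ with $a < b$, and that, after applying the equivalence~(e), the support encoding of each $v_a \neq v_b$ is exactly the family of nogoods $\{\tass{e(v_a,i)},\tass{e(v_b,i)}\}$ over the values $i$. Here I would also note that (e) is not merely a logical consequence of (2--3): both of its directions are reproduced by unit-propagation on (2--3), so the simplified nogoods are unit-propagation equivalent to the unsimplified support clauses. By Theorem~\ref{thm:arc}, unit-propagation on (1--3) together with all these binary nogoods enforces arc consistency on the decomposition, and clearly no more than arc consistency is pruned. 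So it remains to show that, for each value $i$, unit-propagation on the $\mathcal{O}(n)$-rule cardinality encoding~(4) of $\leftarrow 2\{e(v_1,i),\dots,e(v_n,i)\}$ behaves on the atoms $e(v_a,i)$ exactly like the $\mathcal{O}(n^2)$ binary nogoods it replaces: once some $e(v_a,i)$ is true every other $e(v_t,i)$ is made false, and two true literals among $e(v_1,i),\dots,e(v_n,i)$ yield a conflict. I would verify this by tracing unit-propagation through the counter atoms, using that $\leftarrow l(\cdot,2)$ is a unit nogood — so the topmost level-$2$ counter atom is false immediately — and that this false value propagates down the level-$2$ counter chain; together with the level-$1$ counter atom produced by a true $e(v_a,i)$ this forces all remaining $e(v_t,i)$ false, and simultaneity of two true literals produces a conflict. (No loop formulas are needed: the positive dependency graph of (1--4) is acyclic, so $\Delta_\Pi$ already characterises the answer sets.) Combining these facts, unit-propagation on (1--4) enforces precisely arc consistency on the binary decomposition.

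For the complexity, along a fixed branch the assignment only grows, so each of the $\mathcal{O}(nd)$ atoms $e(v,i)$ and each of the $\mathcal{O}(nd)$ auxiliary counter atoms is assigned at most once; this bounds the number of propagation steps by $\mathcal{O}(nd)$. A step assigning an atom attached to variable $v$ and value $i$ can only wake rules of the gadget (1--3) of $v$ — an integrity constraint over $d$ literals and a counter chain of length $d$, i.e.\ $\mathcal{O}(d)$ work — and rules of the cardinality constraint~(4) of value $i$ — a counter chain of length $n$, i.e.\ $\mathcal{O}(n)$ work. Hence the total effort down the branch is $\mathcal{O}(nd)\cdot\mathcal{O}(n+d)=\mathcal{O}(n^2d+nd^2)$; since \emph{all-different} with $n>d$ is unsatisfiable one may assume $n\le d$, so this is $\mathcal{O}(nd^2)$, matching the intuition that each of the $\mathcal{O}(nd)$ value removals is re-propagated through an $\mathcal{O}(d)$-size per-variable gadget.

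The main obstacle is the middle step: showing that the linear-size sequential-counter encoding of the at-most-one constraint~(4) is \emph{propagation complete} for the pairwise nogoods it stands for — that unit-propagation through the counter chains never stalls, and genuinely falsifies every remaining literal as soon as one is set true (flagging a conflict once two are). Everything else is either inherited from Theorem~\ref{thm:arc} or routine step-counting; the only other point needing a little care is the cost model assumed for processing a falsified literal against a clause, which is what pins the bound at quadratic rather than linear in $d$.
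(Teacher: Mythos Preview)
Your argument is correct and follows the same plan as the paper: reduce the propagation claim to Theorem~\ref{thm:arc} by recognising that (4), together with the equivalence~(e) handled by (2--3), instantiates the support encoding of the binary decomposition, and then separately bound the propagation cost. You are in fact more careful than the paper on the first part --- you spell out that the sequential-counter encoding of (4) is propagation-complete for the pairwise nogoods and that no loop nogoods are needed --- whereas the paper simply appeals to ``the definition of cardinality constraints'' without tracing the counter chain. The only real difference is in the complexity accounting: the paper counts nogoods and how often each can be woken, obtaining $\mathcal{O}(nd^2)$ from (1--3) and $\mathcal{O}(nd)$ from (4) directly, since each nogood arising from (4) has bounded size and hence contributes $\mathcal{O}(1)$ over the whole branch. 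Your atom-centric count instead yields $\mathcal{O}(n^2d + nd^2)$ and then invokes $n \le d$. Both reach the stated bound; the paper's version is slightly tighter in that it avoids the extra assumption, which you could match by observing that each $e(v,i)$ and each counter atom appears in only $\mathcal{O}(1)$ rules of (4), so (4) contributes $\mathcal{O}(nd)$ total rather than $\mathcal{O}(n)$ per step.
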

\begin{proof}
From the definition of cardinality constraints, (4) ensure that for all distinct $v,v' \in \scope{c}$ any value $i$ is not taken by both $v$ and $v'$. These integrity constraints correspond to the support encoding of the global \emph{all-different} constraint since (2--3) cover the equivalence ($e$).
By Theorem~\ref{thm:arc}, unit-propagation on this support encoding enforces arc consistency on the binary decomposition of the \emph{all-different} constraint.

For each of the $n$ variables, there are $\mathcal{O}(d)$ nogoods resulting from (1--3) that can be woken $\mathcal{O}(d)$ times down any branch of the search tree. Each propagation requires $\mathcal{O}(1)$ time. Rules (1--3) therefore take $\mathcal{O}(nd^2)$ down any branch of the search to propagate. There are $\mathcal{O}(nd)$ nogoods resulting from (4) that each take $\mathcal{O}(1)$ time to propagate down any branch of the search tree. The total running time is given by $\mathcal{O}(nd^2) + \mathcal{O}(nd) = \mathcal{O}(nd^2)$.
\end{proof}

\subsection{Range Encoding}

In the \emph{range encoding}, a propositional variable~$r(v, l, u)$ is introduced for all $\lbrack l, u \rbrack \subseteq \lbrack 1, d \rbrack$ to represent whether the value of~$v$ is between $l$ and $u$. For each range~$\lbrack l, u \rbrack$, the following $\mathcal{O}(nd^2)$ rules encode $v \in \lbrack l , u \rbrack$ whenever it is safe to assume that $v \not\in \lbrack 1, l-1 \rbrack$ and $v \not\in \lbrack u+1, d\rbrack$, and enforce a consistent set of ranges such that $v \in \lbrack l, u\rbrack \Rightarrow v \in \lbrack l-1, u\rbrack \land v \in \lbrack l, u+1\rbrack$:
\[
\begin{array}{lr@{\ \leftarrow\ }l}
(5) & r(v, l, u) & \dneg r(v, 1, l-1), \dneg r(v, u+1, d) \\
(6) & & r(v, l-1, u), \dneg r(v, l, u) \\
(7) & & r(v, l, u+1), \dneg r(v, l, u)
\end{array}
\]
Constraints are encoded into integrity constraints representing conflict regions. When the combination $v_1 \in \lbrack l_1, u_1\rbrack, \dots, v_n \in \lbrack l_n, u_n\rbrack$ violates the constraint, the following rule is added:
\[
\leftarrow r(v_1, l_1, u_1), \dots, r(v_n, l_n, u_n)
\]
\begin{theorem} \label{thm:rng}
Unit-propagation on the range encoding enforces range consistency on the original constraint.
\end{theorem}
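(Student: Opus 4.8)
The plan is to follow the template behind the proof of Theorem~\ref{thm:arc}, with ``arc support'' replaced by ``bound support'' and the value atoms $e(v,i)$ replaced by the interval atoms $r(v,l,u)$. Fix a single constraint $c$ with $\scope{c}=(v_1,\dots,v_n)$, and let $\nabla_c$ be the set of nogoods induced by the rules (5)--(7) for $v_1,\dots,v_n$ together with the conflict-region integrity constraints of $c$. Let $\assignment$ be an assignment over the $r$-atoms of these variables that is conflict-free and closed under unit propagation on $\nabla_c$ (as computed by the algorithm of Fig.~\ref{alg:up}); it suffices to show that the ``current domains'' that $\assignment$ encodes are exactly the range-consistent ones. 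The first and central task is the \emph{domain correspondence}: read off from $\assignment$, for each variable $v$, a domain $D_v\subseteq[1,d]$ with interval hull $[\underline v,\overline v]=[\min D_v,\max D_v]$ --- concretely $\underline v-1=\max\{\,u\mid r(v,1,u)\in\Fass\,\}$, $\overline v+1=\min\{\,l\mid r(v,l,d)\in\Fass\,\}$ (with the evident degenerate conventions), and $D_v=\{\,i\in[\underline v,\overline v]\mid r(v,i,i)\notin\Fass\,\}$ --- and prove, by induction over the unit-propagation steps, the invariant that $r(v,l,u)\in\Tass$ iff $[\underline v,\overline v]\subseteq[l,u]$ and $r(v,l,u)\in\Fass$ iff $[l,u]\cap D_v=\emptyset$. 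A consequence used repeatedly below is that the \emph{bound atom} $r(v,\underline v,\overline v)$ is always in $\Tass$: rule~(5) derives it from $r(v,1,\underline v-1)\notin\Tass$ and $r(v,\overline v+1,d)\notin\Tass$, which the invariant supplies.

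With the correspondence in hand, \emph{soundness} (unit propagation never over-prunes) is the easy direction. Every nogood of $\nabla_c$ is valid for $c$ under the interval semantics of the $r$-atoms, so a conflict-region nogood $\leftarrow r(v_1,l_1,u_1),\dots,r(v_n,l_n,u_n)$ becomes unit only when all but one variable are already confined to their interval and the box $\prod_j[l_j,u_j]$ is disjoint from $\range{c}$; hence the value(s) it removes from the remaining variable genuinely lack a bound support, and (5)--(7) only perform the bookkeeping forced by the semantics. Thus $D_{v_i}$ never loses a value that has a bound support.

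\emph{Completeness} (range consistency is actually reached) is the point of the theorem. Suppose $d_i\in D_{v_i}$ has no bound support for $c$, i.e.\ $\bigl(\{d_i\}\times\prod_{j\neq i}[\underline{v_j},\overline{v_j}]\bigr)\cap\range{c}=\emptyset$. That product is a box $[l_1,u_1]\times\cdots\times[l_n,u_n]$ with $l_i=u_i=d_i$ and $l_j=\underline{v_j},\,u_j=\overline{v_j}$ for $j\neq i$, so it is a conflict region and $\leftarrow r(v_1,l_1,u_1),\dots,r(v_n,l_n,u_n)\in\nabla_c$. By the invariant, $r(v_j,\underline{v_j},\overline{v_j})\in\Tass$ for every $j\neq i$; since $\assignment$ is conflict-free, $r(v_i,d_i,d_i)\notin\Tass$, so this nogood is unit with unit-resulting literal $\fass{r(v_i,d_i,d_i)}$, and since $\assignment$ is closed under unit propagation we get $r(v_i,d_i,d_i)\in\Fass$, contradicting $d_i\in D_{v_i}$. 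Hence at every unit-propagation fixpoint each remaining value of each variable has a bound support, which is precisely range consistency; together with the soundness direction this shows that unit propagation enforces exactly range consistency.

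I expect the real difficulty to lie entirely in establishing the domain correspondence of the first paragraph and showing it is preserved by unit propagation. One must check that the three interacting schemata (5)--(7) cooperate so that (i) the true $r$-atoms always form the up-set of a single interval while the false ones are exactly the intervals disjoint from $D_v$; (ii) the boundary order-atoms $r(v,1,\underline v-1)$ and $r(v,\overline v+1,d)$ are falsified whenever a bound moves, so that rule~(5) keeps the bound atom $r(v,\underline v,\overline v)$ true and therefore available to trigger the conflict-region nogoods used in the completeness step; and (iii) falsifying an interior singleton $r(v,i,i)$ (a range-consistency ``hole'') propagates only to neighbouring intervals and never spuriously tightens $\underline v$ or $\overline v$, which would be unsound, while still tightening them when the boundary value itself is removed. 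Once this bookkeeping is settled, soundness reduces to the validity of the conflict-region nogoods and completeness to the observation that the exact conflict region needed to prune an unsupported value is always explicitly present in the encoding.
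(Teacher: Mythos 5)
Your proof follows essentially the same route as the paper's: at a unit-propagation fixpoint, a value $d_i$ without bound support lies in an explicitly encoded conflict region whose nogood becomes unit once rules (5)--(7) have forced the interval-hull atoms of the other variables true, so $d_i$ is pruned --- the paper merely works with a conflict region containing the current box and pushes the resulting falsity down to $d_i$ via (6)--(7), whereas you choose the tightest region with the singleton $\lbrack d_i,d_i\rbrack$ so the falsity lands directly on $r(v_i,d_i,d_i)$. The extra scaffolding you add (the explicit domain-correspondence invariant and the separate soundness direction) is absent from the paper's proof, which is no more rigorous on those bookkeeping points than your sketch.
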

\begin{proof}
Suppose we have a set of ranges on the domains of the constraint variables in which no unit-propagation is possible and no domain is empty.
Consider any constraint variables~$v_1, \dots, v_i, \dots, v_n$ and value~$d_i$ such that there is no bound support in $v_1, \dots, v_{i-1}, v_{i+1}, \dots v_n$ for $v_i = d_i$, i.e. there are no compatible values for the other variables~$v_j$ distinct from $v_i$ where $v_j \in \lbrack l_j, u_j \rbrack$. Hence, all instantiations such that $v_i = d_i$ are in a conflict region $v_1 \in \lbrack l_1, u_1 \rbrack \subseteq \lbrack l'_1, u'_1 \rbrack, \dots, v_n \in \lbrack l_n, u_n \rbrack \subseteq \lbrack l'_n, u'_n \rbrack$. For each $v_j$ we have $\tass{r(v_j,l_j,u_j)}$ representing $v_j \in \lbrack l_j, u_j \rbrack$. Then, the binary nogoods $\{\tass{r(v_j,l_j,u_j)}, \fass{r(v_j,l_j-1,u_j)}\}$ and $\{\tass{r(v_j,l_j,u_j)}, \fass{r(v_j,l_j,u_j+1)}\}$ resulting from (6) and (7) are unit, and eventually we get $\tass{r(v_j,l'_j,u'_j)}$ for $\lbrack l_j, u_j \rbrack \subseteq \lbrack l'_j, u'_j \rbrack$. But then the nogood $\{ \tass{r(v_1, l'_1,u'_1)}, \dots, \tass{r(v_n, l'_n, u'_n)} \}$ encoding the conflict region is unit and forces $\fass{r(v_i,l'_i,u'_i)}$ representing $v_i \not\in \lbrack l'_i, u'_i \rbrack$. By nogoods resulting from (6) and (7) we get $d_i$ is not in the domain of $v_i$, and the domains are bound consistent as required.
Since at least one value must be in each domain, encoded in (5), we have a set of non-empty domains which are range consistent.
\end{proof}
A propagator for the global \emph{all-different} constraint that enforces range consistency pruning Hall intervals has been proposed in \citeN{le96a} and encoded to SAT in \citeN{bekanaquwa09a}.
An interval~$\lbrack l, u \rbrack$ is a \emph{Hall interval} iff $|\{ v \mid dom(v) \subseteq \lbrack l, u \rbrack \}| = u - l + 1$. In other words, a Hall interval of size~$k$ completely contains the domains of $k$~variables. Observe that in any bound support, the variables whose domains are contained in the Hall interval consume all values within the Hall interval, whilst any other variable must find their support outside the Hall interval.
\begin{example}
Consider the global \emph{all-different} constraint over the variables~$\{v_1, v_2, v_3, v_4\}$ with $\domain{v_1} = \{2,3\}$, $\domain{v_2} = \{1,2,4\}$, $\domain{v_3} = \{2,3\}$, $\domain{v_4} = \{1,2,3,4\}$. $\lbrack 2,3\rbrack$ is a Hall interval of size 2 as the domain of 2 variables, $v_1$ and $v_3$, is completely contained in it. Therefore we can remove $\lbrack 2,3\rbrack$ from the domains of all the other variables. This leaves $v_2$ and $v_4$ with a domain containing values 1 and 4.
\end{example}
The following decomposition of the global \emph{all-different} constraint will permit us to achieve range consistency via unit propagation. It ensures that no interval $\lbrack l, u\rbrack$ can contain more variables than its size. 
\[
\begin{array}{lr@{\ \leftarrow\ }l}
(8) & & u-l+2\ \{ r(v_1, l, u), \dots, r(v_n, l, u) \}
\end{array}
\]
This simple decomposition can simulate a complex propagation algorithm like \citeANP{le96a}'s with a similar overall complexity of reasoning.
\begin{corollary} \label{col:rng}
Unit-propagation on (5--8) enforces range consistency on the global \emph{all-different} constraint in $\mathcal{O}(nd^3)$ down any branch of the search tree.
\end{corollary}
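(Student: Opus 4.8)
The plan is to split the claim into a \emph{correctness} part and a \emph{complexity} part, reducing the first to Theorem~\ref{thm:rng} via a combinatorial characterization of the conflict regions of \emph{all-different}, and handling the second by a resource count parallel to the one in the support-encoding corollary.

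For correctness, I would first argue that the family of cardinality constraints~(8), ranging over all intervals $\lbrack l,u\rbrack \subseteq \lbrack 1,d\rbrack$, represents exactly the conflict regions of \emph{all-different} in the sense of Theorem~\ref{thm:rng}. One direction is the pigeonhole principle: if $u-l+2$ variables each have their current range contained in $\lbrack l,u\rbrack$, no assignment of pairwise distinct values to them exists, so such a combination is a genuine conflict region, and~(8) (via the $\mathcal{O}(nk)$ cardinality transformation recalled in the Background, whose unit-propagation behaviour matches that of the integrity constraints it abbreviates) forbids it. The converse is the substantive step: by the defect form of Hall's theorem, a tuple of interval-ranges admits a bound support for \emph{all-different} iff no interval $\lbrack l,u\rbrack$ contains the ranges of more than $u-l+1$ variables. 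Here I would use the observation that a $\subseteq$-minimal Hall-violating set of intervals has a union that is itself a single interval (otherwise one ``connected component'' of the union, with the corresponding sub-family, would already violate Hall, contradicting minimality), so every conflict region is subsumed by an over-full interval $\lbrack l,u\rbrack$. Given this, I would replay the argument in the proof of Theorem~\ref{thm:rng}: in a propagation-stable, conflict-free state with no empty domain, if $v_i = d_i$ had no bound support, then the over-full interval $\lbrack l,u\rbrack$ witnessing this has $u-l+1$ of the atoms $r(v_j,l,u)$ with $j \neq i$ already true (by monotonicity through~(6),(7)), so the instance of~(8) for $\lbrack l,u\rbrack$ is unit and forces $\fass{r(v_i,l,u)}$, whence $\fass{r(v_i,d_i,d_i)}$ again by~(6),(7); and~(5) keeps every domain non-empty. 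Hence unit-propagation on (5--8) enforces range consistency on \emph{all-different}.

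For the complexity part I would count as in the support-encoding corollary. Rules (5--7) give $\mathcal{O}(nd^2)$ nogoods (for each of the $n$ variables, one per range $\lbrack l,u\rbrack$); each can be woken $\mathcal{O}(d)$ times down a branch, since a variable's bounds can only move $\mathcal{O}(d)$ times, and each propagation is $\mathcal{O}(1)$, for $\mathcal{O}(nd^3)$ in total. Since $r(v,l,u)$ is monotone down a branch (once a range tightens within $\lbrack l,u\rbrack$ it stays so), each of the $\mathcal{O}(d^2)$ cardinality constraints of~(8), over $n$ literals and encoded by $\mathcal{O}(nd)$ rules, contributes at most $\mathcal{O}(nd^3)$ total work down a branch (and only $\mathcal{O}(nd^2)$ with a native cardinality propagator). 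Summing, the running time is $\mathcal{O}(nd^3) + \mathcal{O}(nd^3) = \mathcal{O}(nd^3)$.

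I expect the main obstacle to be the converse of the correctness step: showing rigorously that the compact encoding~(8) propagates as strongly as the (exponentially many) explicit conflict-region integrity constraints demanded by Theorem~\ref{thm:rng}. This needs both the Hall-theorem/interval-union combinatorics above and a careful check that the $\mathcal{O}(nk)$ cardinality transformation does trigger unit-propagation precisely when $k-1$ of its literals are true. By contrast, the resource count is essentially bookkeeping once the monotonicity of the $r$-atoms is noted.
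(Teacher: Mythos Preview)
Your proposal is correct and follows essentially the same route as the paper: reduce correctness to Theorem~\ref{thm:rng} by arguing that (8) captures the conflict regions of \emph{all-different}, then count resources. The paper simply asserts that ``(8) reflects all conflict regions'' and invokes Theorem~\ref{thm:rng}, whereas you actually supply the missing combinatorics (pigeonhole for one direction, the defect form of Hall's theorem plus the interval-union observation for the converse) and flag the need to verify that the $\mathcal{O}(nk)$ cardinality transformation propagates at the right strength; this is more rigorous than the paper but not a different approach. On complexity, the paper accounts for (8) as $\mathcal{O}(nd^2)$ nogoods at $\mathcal{O}(1)$ each (implicitly assuming an incremental/native cardinality propagator), while you charge $\mathcal{O}(nd^3)$ for the rule-level expansion and note the $\mathcal{O}(nd^2)$ native alternative; since (5--7) already cost $\mathcal{O}(nd^3)$, both analyses yield the same total.
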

\begin{proof}
Clearly, the cardinality constraints (8) reflect all conflict regions such that no Hall interval $\lbrack l, u \rbrack$ can contain $u-l+2$ variables, that are more variables than its size. Hence, (8) is a range encoding of the global \emph{all-different} constraint. By Theorem~\ref{thm:rng}, unit-propagation on this encoding enforces range consistency on the global \emph{all-different} constraint.

There are $\mathcal{O}(nd^2)$ nogoods resulting from (5--7) that can be woken $\mathcal{O}(d)$ times down any branch of the search tree. Each propagation requires $\mathcal{O}(1)$ time. Rules (5--7) therefore take $\mathcal{O}(nd^3)$ down any branch of the search to propagate. There are $\mathcal{O}(nd^2)$ nogoods resulting from (8) that each take $\mathcal{O}(1)$ time to propagate down any branch of the search tree. The total running time is given by $\mathcal{O}(nd^3) + \mathcal{O}(nd^2) = \mathcal{O}(nd^3)$.
\end{proof}

\subsection{Bound Encoding}

A last encoding is called the \emph{bound encoding} \cite{crba94a}. In the bound encoding, a propositional variable~$b(v, i)$ is introduced for each value $i$ to represent that the value of~$v$ is bounded by~$i$. That is, $v \leq i$ if $\tass{b(v,i)}$, and $v > i$ if $\fass{b(v,i)}$. Similar to the direct encoding, for each $v$, the truth-assignments of atoms~$b(v, i)$ are encoded by a choice rule (9). In order to ensure that assignments represent a consistent set of bounds, the condition $v \leq i \Rightarrow v \leq i+1$ is posted as integrity constraints (10). Another integrity constraint (11) encodes $v \leq d$, that at least one value must be assigned to $v$:
\[
\begin{array}{cr@{\ \leftarrow\ }l}
(9) & \{ b(v, 1), \dots, b(v, d) \} & \\
(10) & &  b(v, i), \dneg b(v, i+1)\qquad \forall i \in \lbrack 1, d-1 \rbrack \\
(11)& & \dneg b(v, d)
\end{array}
\]
Constraints are encoded into integrity constraints representing conflict regions similar to the range encoding. When all combinations in the region
\[
l_1 < v_1 \leq u_1, \dots, l_n < v_n \leq u_n
\]
violate a constraint, the following rule is added:
\[
\leftarrow b(v_1, u_1), \dots, b(v_n, u_n), \dneg b(v_1, l_1), \dots, \dneg b(v_n, l_n)
\]
\begin{theorem} \label{thm:bnd}
Unit-propagation on the bound encoding enforces bound consistency on the original constraint.
\end{theorem}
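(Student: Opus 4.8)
The plan is to mirror the proof of Theorem~\ref{thm:rng}, adapting it to the ``bounded by'' reading of the atoms $b(v,i)$. Fix a set of bounds at which unit-propagation has reached a fixpoint and no domain is empty. \textbf{Step 1: characterise the fixpoint.} The binary nogoods $\{\tass{b(v,i)},\fass{b(v,i+1)}\}$ arising from (10) force $\{i \mid \tass{b(v,i)} \in \assignment\}$ to be upward closed and $\{i \mid \fass{b(v,i)} \in \assignment\}$ to be downward closed, and the two sets must be disjoint (otherwise propagation along (10) produces a clashing pair, i.e.\ a conflict). Together with (11), which forces $\tass{b(v,d)}$, this shows the current domain of $v$ is exactly the interval $\{l_v+1,\dots,u_v\}$, where $u_v = \min\{i \mid \tass{b(v,i)} \in \assignment\}$ and $l_v = \max(\{0\} \cup \{i \mid \fass{b(v,i)} \in \assignment\})$; in particular $\min(\domain{v}) = l_v+1$, $\max(\domain{v}) = u_v$, and $\tass{b(v,u_v)} \in \assignment$ as well as $\fass{b(v,l_v)} \in \assignment$ whenever $l_v \geq 1$.

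\textbf{Step 2: force a propagation whenever a bound support is missing.} Take a constraint $c$, a variable $v_i \in \scope{c}$, and a value $d_i \in \{\min(\domain{v_i}),\max(\domain{v_i})\}$, and suppose for contradiction that $v_i = d_i$ has no bound support, i.e.\ every tuple with $v_i = d_i$ and $v_j \in [\min(\domain{v_j}),\max(\domain{v_j})]$ for $j \neq i$ violates $c$. Then all combinations in the region $l_1 < v_1 \leq u_1, \dots, d_i-1 < v_i \leq d_i, \dots, l_n < v_n \leq u_n$ (whose $i$-th interval is the singleton $\{d_i\}$) violate $c$, so the encoding contains the corresponding conflict-region integrity constraint and hence the nogood
\[
\delta = \{\tass{b(v_j,u_j)} \mid j \neq i\} \cup \{\fass{b(v_j,l_j)} \mid j \neq i,\ l_j \geq 1\} \cup \{\tass{b(v_i,d_i)},\ \fass{b(v_i,d_i-1)}\},
\]
the last negative literal dropped when $d_i = 1$. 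By Step~1 every literal of $\delta$ already belongs to $\assignment$ except possibly one: if $d_i = \min(\domain{v_i}) = l_i+1$ then $\fass{b(v_i,d_i-1)} = \fass{b(v_i,l_i)} \in \assignment$ while $b(v_i,d_i)$ may be unassigned; if $d_i = \max(\domain{v_i}) = u_i$ then $\tass{b(v_i,d_i)} \in \assignment$ while $b(v_i,d_i-1)$ may be unassigned. So either $\delta \subseteq \assignment$, which is a conflict and impossible at this fixpoint, or $\delta$ is unit and unit-propagation adds $\fass{b(v_i,d_i)}$ (first case) or $\tass{b(v_i,d_i-1)}$ (second case); a short check with Step~1 shows the added literal was genuinely unassigned, so this is a real propagation. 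In both cases the outcome removes $d_i$ from $\domain{v_i}$, contradicting $d_i \in \domain{v_i}$. Hence at the fixpoint every minimum and maximum domain value has a bound support, which is exactly bound consistency of $c$.

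\textbf{Main obstacle.} The delicate part is Step~1 together with the bookkeeping in Step~2: one must be precise about which conflict-region rule the static encoding contains for the particular interval tuple that happens to coincide with the current domains along this branch, and about the degenerate cases $l_j = 0$ (the atom $b(v_j,0)$ does not exist, so that literal is simply absent) and $d_i = 1$ or $\domain{v_i}$ a singleton (where $\delta \subseteq \assignment$ already signals a conflict rather than a propagation). Once the interval characterisation of the domains induced by (9)--(11) is nailed down, the remainder is essentially a transcription of the range-encoding argument. As with Theorem~\ref{thm:rng}, this establishes only that propagation reaches a bound-consistent state, not a running-time bound; complexity is addressed separately for the \emph{all-different} specialisation.
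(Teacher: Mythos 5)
Your proof is correct and follows essentially the same route as the paper's: assume a propagation fixpoint with non-empty domains, observe that a missing bound support for a minimum or maximum value places the corresponding box of tuples inside a conflict region whose integrity-constraint nogood is then unit (or already violated) under the current assignment, and conclude that unit-propagation would prune that value, a contradiction. The only cosmetic difference is that you pin down the interval shape of the domains first and instantiate the conflict region tightly (singleton interval on $v_i$), whereas the paper works with a possibly larger primed region and walks the rules (10) to reach its literals; both arguments are the same in substance.
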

\begin{proof}
Suppose we have a set of bounds on the domains of the constraint variables in which no unit-propagation is possible and no domain is empty.
Consider any constraint variable~$v_i$ such that if $v_i$ is assigned its minimum domain value~$l_i+1$ or its maximum domain value~$u_i$ there are no compatible values of the other constraint variables $v_1,\dots, v_{i-1}, v_{i+1}, \dots, v_n$ between their minimum $l_1+1, \dots, l_{i-1}+1, l_{i+1}+1, \dots, l_n+1$ and their maximum domain values $u_1, \dots, u_{i-1}, u_{i+1}, \dots, u_n$, respectively. First, we analyse the case $v_i = u_i$, that is, all instantiations such that $v_i = u_i$ are in a conflict region $l'_1 \leq l_1 < v_1 \leq u_1 \leq u'_1, \dots, l'_n \leq l_n < v_n \leq u_n \leq u'_n$. For each $v_j$ we have $\fass{b(v_j,l_j)}$ and $\tass{b(v_j,u_j)}$, representing $v_j > l_j$ and $v_j \leq u_j$.
Then, the binary nogoods $\{\tass{b(v_j,l_j-1)}, \fass{b(v_j,l_j)}\}$ and $\{\tass{b(v_j,u_j)}, \fass{b(v_j,u_j+1)}\}$ resulting from~(10) are unit, and eventually we get $\fass{b(v_j,l'_j)}$ for $l'_j \leq l_j$ as well as $\tass{b(v_j,u'_j)}$ for $u'_j \geq u_j$. But then the nogood $\{ \fass{b(v_1, l'_1)}, \tass{b(v_1, u'_1)}, \dots, \fass{b(v_n, l'_n)}, \tass{b(v_n, u'_n)} \}$ encoding the conflict region is unit and forces $\tass{b(v_i,l'_i)}$ representing $v_i \leq l'_i$.
Since $l'_i < u_i$ and by the nogoods resulting from~(10) we get $u_i$ is not in the domain of $v_i$.

The second case, where $v_i$ is assigned its minimum domain value $l_i+1$, is symmetric, and we conclude that the domains are bound consistent as required.
Since at least one value must be in each domain, resulting from (11), we have a set of non-empty domains which are bound consistent.
\end{proof}
In order to get a representation of the global \emph{all-different} constraint that can only prune bounds, the bound encoding for variables is linked to (8) as follows:
\[
\begin{array}{cr@{\ \leftarrow\ }l}
(12) & r(v, l, u) & \dneg b(v, l-1), b(v, u) \\
(13) & & r(v, l, u), b(v, l-1) \\
(14) & & r(v, l, u), \dneg b(v, u)
\end{array}
\]
\begin{corollary}
Unit-propagation on (8--14) enforces bound consistency on the global \emph{all-different} constraint in $\mathcal{O}(nd^2)$ down any branch of the search tree.
\end{corollary}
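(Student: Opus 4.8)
The plan is to prove the statement in the same two stages as Corollary~\ref{col:rng}: first show that rules~(8) together with~(12--14) form a \emph{bound encoding} of the global \emph{all-different} constraint, so that Theorem~\ref{thm:bnd} delivers bound consistency, and then bound the propagation cost by counting nogoods and wake-ups group by group. The one structural novelty compared with Corollary~\ref{col:rng} is that the range atoms~$r(v,l,u)$ are no longer axiomatized by the $\mathcal{O}(nd^2)$ rules~(5--7) but reconstructed from the bound atoms by~(12--14); this is precisely what will let the cost drop from $\mathcal{O}(nd^3)$ to $\mathcal{O}(nd^2)$.

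For correctness I would first read off the meaning of~(12--14): rule~(12) derives $r(v,l,u)$ from $\fass{b(v,l-1)}$ and $\tass{b(v,u)}$ (i.e.\ from $v>l-1$ and $v\le u$), and~(13),~(14) drive these two bound literals back as soon as $r(v,l,u)$ is set true, so that under the bound representation $r(v,l,u)$ faithfully stands for the interval membership $v\in\lbrack l,u\rbrack$. Consequently each integrity constraint produced by the cardinality constraint~(8) for an interval $\lbrack l,u\rbrack$ and a $k$-element set of variables, with $k=u-l+2$, translates through this correspondence into the bound-encoding conflict region $l-1<v_{j_1}\le u,\dots,l-1<v_{j_k}\le u$, which is indeed a conflict region since $k$ variables cannot take $k$ distinct values in an interval of size $k-1$. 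For the converse direction I would invoke Hall's theorem: if fixing some variable to one of its bounds leaves the remaining variables with no system of distinct representatives inside their bound intervals, then some interval is over-full, i.e.\ a Hall interval of the form captured by~(8) is present; hence the conflict regions generated by~(8) together with~(12--14) are exactly the ones needed, rules~(8) and~(12--14) constitute a bound encoding of \emph{all-different}, and Theorem~\ref{thm:bnd} applies. The main obstacle is the propagation bookkeeping: one must verify that unit propagation relayed through the auxiliary atoms~$r(v,l,u)$ and through the auxiliary atoms introduced by the $\mathcal{O}(nk)$ transformation of~(8) reproduces the behaviour of the direct conflict-region nogoods of Theorem~\ref{thm:bnd} in both directions and with no loss of strength; the delicate case is the backward one, where a false $r(v,l,u)$ inferred from~(8) must still tighten a bound $b(v,\cdot)$ once one side of the interval is already pinned.

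For the complexity bound I would mirror the accounting in Corollary~\ref{col:rng}. Rules~(9--11) contribute $\mathcal{O}(d)$ nogoods per variable, hence $\mathcal{O}(nd)$ altogether, each over a constant number of literals and woken $\mathcal{O}(d)$ times down a branch, for $\mathcal{O}(nd^2)$. Rules~(12--14) contribute one nogood per variable and per interval, i.e.\ $\mathcal{O}(nd^2)$ nogoods, each over a constant number of literals and woken $\mathcal{O}(1)$ times, again for $\mathcal{O}(nd^2)$. The cardinality constraints~(8), with the $\mathcal{O}(n)$-per-constraint propagator applied to the $\mathcal{O}(d^2)$ intervals, cost $\mathcal{O}(nd^2)$ down any branch. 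Summing, $\mathcal{O}(nd^2)+\mathcal{O}(nd^2)+\mathcal{O}(nd^2)=\mathcal{O}(nd^2)$, as claimed. I expect essentially all of the real work to be in the correctness step rather than in this counting.
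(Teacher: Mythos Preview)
Your proposal is correct and follows essentially the same route as the paper: the paper's proof is extremely terse, simply appealing to Corollary~\ref{col:rng} and Theorem~\ref{thm:bnd} with the observation that ``the decompositions for range and bound consistency both encode the same conflict regions,'' and then giving the same nogood-counting as you do for the $\mathcal{O}(nd^2)$ bound. Your write-up supplies the details the paper omits (the equivalence induced by~(12--14), the Hall-interval completeness argument, and the backward-propagation bookkeeping), but the underlying strategy and the complexity accounting are the same.
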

\begin{proof} This result follows from Corollary \ref{col:rng} and Theorem \ref{thm:bnd}. Observe that the decompositions for range and bound consistency both encode the same conflict regions.

For each of the $n$ variables, there are $\mathcal{O}(d)$ nogoods resulting from (9--11) that can be woken $\mathcal{O}(d)$ times down any branch of the search tree. Each propagation requires $\mathcal{O}(1)$ time. Rules (9--11) therefore take $\mathcal{O}(nd^2)$ down any branch of the search to propagate. Furthermore, there are $\mathcal{O}(nd^2)$ nogoods resulting from (8) and (12--14) that each take $\mathcal{O}(1)$ time to propagate down any branch of the search tree. The total running time is given by $\mathcal{O}(nd^2)$.
\end{proof}
Note that an upper bound $h$ can be posted on the size of Hall intervals. The resulting encoding with only those cardinality constraints (5) for which $u - l + 1 \leq h$ detects Hall intervals of size at most $h$, and therefore enforces a weaker level of consistency.

\section{Experiments} \label{sec:exp}

To evaluate our decompositions, we conducted experiments on encodings\footnote[1]{\texttt{http://potassco.sourceforge.net/}} of CSP containing \emph{all-different} and \emph{permutation} constraints.
The global \emph{permutation} constraint is a special case of \emph{all-different} when the number of variables is equal to the number of all their possible values. A decomposition of \emph{permutation} extends (4) by
\[
\leftarrow \dneg e(v_1, i), \dots, \dneg e(v_n, i)
\]
or (8) by the following rule where $1 \leq l \leq u \leq k$:
\[
\leftarrow d-u+l\ \{ \dneg r(v_1, l, u), \dots, \dneg r(v_n, l, u) \}
\]
This can increase propagation.
Our translational approach to Constraint Answer Set Solving has been implemented within the prototypical preprocessor \systemname{inca}.
Although our semantics is propositional, \systemname{inca} compiles constraint logic programs with first-order variables, function symbols, and aggregates, etc. in linear time and space, such that the logic program can be obtained by a \emph{grounding} process.
Experiments consider \systemname{inca}\footnotemark[1] in different settings using different decompositions. We denote the support encoding of the global constraints by \encsup, the bound encoding of the global constraints by \encbou, and the range encoding of the global constraints by \encran. To explore the impact of small Hall intervals, we also tried \encbouh{k} and \encranh{k}, an encoding of the global constraints with only those cardinality constraints (8) for which $u-l+1 \leq k$. The consistency achieved by \encbouh{k} and \encranh{k} is therefore weaker than full bound and range consistency, respectively.

We also include the integrated systems \systemname{clingcon}\footnotemark[1] (0.1.2), and \systemname{ezcsp}\footnote[2]{\texttt{http://krlab.cs.ttu.edu/\~{}marcy/ezcsp/}} (1.6.9) in our empirical analysis.
While \systemname{clingcon} extends the ASP system \systemname{clingo}\footnotemark[1] (2.0.2) with the generic constraint solver \systemname{gecode}\footnote[3]{\texttt{http://www.gecode.org/}} (2.2.0), \systemname{ezcsp} combines the grounder \systemname{gringo}\footnotemark[1] (2.0.3) and ASP solver \systemname{clasp}\footnotemark[1] (1.3.0) with \systemname{sicstus}\footnote[4]{\texttt{http://www.sics.se/sicstus/}} (4.0.8) as a constraint solver.
Note that \systemname{clingo} stands for \systemname{clasp} on \systemname{gringo} and combines both systems in a monolithic way. Since \systemname{inca} is a pure preprocessor, we select the ASP system \systemname{clingo} (2.0.3) as its backend to provide a representative comparison with \systemname{clingcon} and \systemname{ezcsp}.
The behaviour of \systemname{xpanda} is simulated by setting \encsup{} and, therefore, is not considered in our study. We also do not separate time spend on grounding and solving the problem, since the grounder's share of the overall runtime is generally insignificant on our benchmarks.

To compare the performance of Constraint Answer Set solvers against pure CP systems, we also report results of \systemname{gecode} (3.2.0). Its heuristic for variable selection was set to a smallest domain as in \systemname{clingcon}.
All experiments were run on a 2.00~GHz PC under Linux. We report results in seconds, where each run was limited to 600 s time and 1 GB RAM.

\paragraph{Space Complexity.} Data on the size of selected translations shown in Table \ref{tab:timespace} confirms our theoretical results. For $PHP$ and $DW_n$ instances (description follows), the number of atoms in the support (bound/range) encoding is bounded by $\mathcal{O}(n^2)$ ($\mathcal{O}(n^3)$). The number of rules is $\mathcal{O}(n)$ ($\mathcal{O}(n^3)$) for $PHP$, $\mathcal{O}(n^3)$ ($\mathcal{O}(n^3)$) for $DW_n$ due to constraints represented in the direct encoding. An $n \times n$ table is modelled in $QG5$, raising the number of atoms to $\mathcal{O}(n^3)$ ($\mathcal{O}(n^4)$), and rules to $\mathcal{O}(n^4)$ ($\mathcal{O}(n^4)$).
\begin{table}
\caption{Data on time and space for selected translations.}
\label{tab:timespace}
\begin{minipage}{\textwidth}
\begin{tabular}{ccccccccc} \hline\hline
&     & \multicolumn{3}{c}{\encsup} & \multicolumn{2}{c}{\encbou} & \multicolumn{2}{c}{\encran} \\
& $n$ & time & atoms & rules & atoms & rules & atoms & rules \\ \hline
      & 12 & 0.1 & 236 & 151 & 1,829 & 1,907 & 1,061 & 2,423 \\
$PHP$ & 14 & 0.1 & 304 & 177 & 2,876 & 2,981 & 1,630 & 3,877 \\
      & 16 & 0.1 & 380 & 203 & 4,263 & 4,399 & 2,375 & 5,823 \\ \noalign{\vspace {.2cm}}

      &  8 & 0.3 &   655 &  36,525 & 1,297 &  41,565 &  3,217 &  38,237 \\
$QG5$ & 10 & 0.9 & 1,219 & 109,066 & 2,421 & 118,946 &  7,121 & 111,446 \\
      & 12 & 2.3 & 2,039 & 267,643 & 4,057 & 284,755 & 13,849 & 270,067 \\ \noalign{\vspace {.2cm}}

      &  4 & 0.6 &   564 &    74,945 &    941 &    78,848 &  4,135 &    81,174 \\
$DW_n$&  6 & 3.0 & 1,130 &   363,115 &  1,983 &   373,002 & 12,581 &   381,724 \\
      &  8 & 9.2 & 1,888 & 1,122,549 &  3,409 & 1,142,132 & 28,355 & 1,163,810 \\ \hline\hline
\end{tabular}
\vspace{-2\baselineskip}
\end{minipage}
\end{table}

\subsection{Pigeon Hole Problems}

The \emph{Pigeon Hole Problem} (PHP) is to show that it is impossible to put $n$ pigeons into $n-1$ holes if each pigeon must be put into a distinct hole.
Clearly, our bound and range decompositions are faster compared to weaker encodings (see Table \ref{tab:php}). On such problems, detecting large Hall intervals is essential.
\begin{table}
\caption{Runtime results in seconds for PHP.}
\label{tab:php}
\begin{minipage}{\textwidth}
\begin{tabular}{ccccccccccc} \hline\hline
$n$ & \encsup & \encbouh{1} & \encbouh{2} & \encbouh{3} & \encbou & \encranh{3} & \encran & \systemname{ezcsp} & \systemname{clingcon} & \systemname{gecode} \\ \hline
10 & 5.4 & 0.7 & 0.5 & 0.1 & \textbf{0.0} & 0.2 & \textbf{0.0} & 1.8 & 1.4 & 0.9 \\
11 & 46.5 & 3.5 & 1.5 & 1.0 & \textbf{0.0} & 1.9 & \textbf{0.0} & 16.7 & 15.2 & 9.0 \\
12 & 105.0 & 14.8 & 7.1 & 3.9 & \textbf{0.0} & 2.6 & 0.1 & 183.9 & 172.5 & 104.1 \\
13 & --- & 91.4 & 68.6 & 25.4 & 0.1 & 30.4 & \textbf{0.0} & --- & --- & --- \\
14 & --- & --- & 350.1 & 125.0 & \textbf{0.0} & 196.9 & 0.1 & --- & --- & --- \\
15 & --- & --- & --- & --- & \textbf{0.1} & --- & \textbf{0.1} & --- & --- & --- \\
16 & --- & --- & --- & --- & \textbf{0.1} & --- & \textbf{0.1} & --- & --- & --- \\ \hline\hline
\end{tabular}
\vspace{-2\baselineskip}
\end{minipage}
\end{table}

\subsection{Quasigroup Completion}

A \emph{quasigroup} is an algebraic structure~$(Q,\cdot)$, where $Q$ is a set and~$\cdot$ is a binary operation on~$Q$ such that for every pair of elements~$a,b \in Q$ there exist unique elements $x,y \in Q$ which solve the equations $a \cdot x = b$ and $y \cdot a = b$. The \emph{order}~$n$ of a quasigroup is defined by the number of elements in~$Q$. A quasigroup can be represented by an $n \times n$-multiplication table, where for each pair $a,b$ the table gives the result of $a \cdot b$, and it defines a \emph{Latin square}. This means that each element of $Q$ occurs exactly once in each row and each column of the table. 
The \emph{Quasigroup Completion Problem}~(QCP) is to determine whether a partially filled table can be completed in such a way that a multiplication table of a quasigroup is obtained.
Randomly generated QCP has been proposed as a benchmark domain for CP systems by \citeN{gose97a} since it combines the features of purely random problems and highly structured problems.
\begin{table}
\caption{Average times over 100 runs on QCP. Timeouts are given in parenthesis.}
\label{tab:qcp}
\begin{minipage}{\textwidth}
\begin{tabular}{cccccccccc} \hline\hline
\% & \encsup & \encbouh{3} & \encbou & \encranh{3} & \encran & \systemname{ezcsp} & \systemname{clingcon} & \systemname{gecode} & \systemname{gecode}$_{BC}$ \\ \hline
10 & 2.6 & 5.0 & 8.2 & 6.0 & 7.3 &29.6 (7) & 9.7 (4) & 2.2 (4) & 0.5 (1) \\
20 & 2.4 & 5.0 & 8.0 & 6.2 & 7.2 &21.3 (20) & 6.2 (5) & 5.0 (4) & 0.9 (3) \\
30 & 2.3 & 4.8 & 7.9 & 6.1 & 7.1 &10.3 (30) & 12.9 (13) & 2.9 (13) & 1.1 (5) \\
35 & 2.3 & 4.8 & 7.9 & 6.1 & 7.0 &21.6 (24) & 11.2 (17) &14.1 (13) & 6.2 (7) \\
40 & 2.3 & 4.7 & 7.8 & 6.0 & 6.9 &51.6 (29) & 23.1 (22) &11.7 (20) & 5.7 (9) \\
45 & 2.3 & 4.7 & 7.8 & 5.9 & 6.8 &36.3 (35) & 14.7 (28) &17.7 (25) & 6.3 (13) \\
50 & 2.3 & 4.6 & 7.7 & 5.9 & 6.8 &36.1 (50) & 21.2 (37) &25.1 (32) & 6.3 (18) \\
55 & 2.3 & 4.5 & 7.6 & 5.8 & 6.7 &61.4 (51) & 24.4 (44) &19.6 (41) &30.9 (29) \\
60 & 2.2 & 4.4 & 7.5 & 5.6 & 6.6 &60.2 (63) & 31.4 (56) &36.0 (51) &27.2 (35) \\
70 & 2.2 & 4.2 & 7.1 & 5.1 & 6.0 &70.0 (66) & 30.2 (50) &28.0 (45) &17.0 (27) \\
80 & 2.1 & 4.0 & 6.7 & 4.7 & 5.5 &16.2 (18) & 4.2 (18) &17.2 (13) & 7.0 (7) \\
90 & 2.1 & 4.0 & 6.7 & 4.7 & 5.5 & 1.4 & 2.6 (1) & 0.4 (1) & 3.2 \\ \hline\hline
\end{tabular}
\vspace{-2\baselineskip}
\end{minipage}
\end{table}
Table \ref{tab:qcp} compares the runtime for solving QCP problems of size $20 \times 20$ where the first column gives the percentage of preassigned values. We included \systemname{gecode} with algorithms that enforce bound and domain consistency, denoted as \systemname{gecode}$_{BC}$ and \systemname{gecode}$_{DC}$ (not shown), in the experiments. Our analysis exhibits phase transition behaviour of the systems \systemname{ezcsp}, \systemname{clingcon}, \systemname{gecode}, and \systemname{gecode}$_{BC}$, while our Boolean encodings and \systemname{gecode}$_{DC}$ solve all problems within seconds. Interestingly, learning constraint interdependencies as in our approach is sufficient to tackle QCP. In fact, most of the time for $S$, $B_k$, $R_k$ is spent on grounding, but not for solving the actual problem.

\subsection{Quasigroup Existence}

The \emph{Quasigroup Existence Problem} (QEP) is to determine the existence of certain interesting classes of quasigroups. We follow \citeN{fuslbe93a} and look at problems QG1 to QG7 that were target to open questions in finite mathematics.
We represent them in the direct encoding which weakens the overall consistency. Furthermore,
we add the axiom $a \cdot n \geq a - 1$ where $n$ is the order of the desired quasigroup, to avoid some symmetries in search space. We also assume quasigroups to be \emph{idempotent}, that means $a \cdot a = a$. QEP has been proposed as a benchmark domain for CP systems in \citeN{gewa99a}.
\begin{table}
\caption{Runtime results in seconds for QEP.}
\label{tab:qep}
\begin{minipage}{\textwidth}
\begin{tabular}{ccccccccccc} \hline
 & $n$ & \encsup & \encbouh{1} &
 \encbouh{3} & \encbou & \encranh{3} & \encran & \systemname{ezcsp} & \systemname{clingcon} & \systemname{gecode} \\ \hline\hline
& $7$ & 1.7 & 1.7 & 
 1.7 & 1.7 & 1.7 & 1.6 & 65.0 & 189.8 & \textbf{0.6} \\
QG1 & $8$ & 19.0& 5.9 &
 \textbf{4.7} &19.8 & 6.4 &\textbf{4.7} & --- & --- & --- \\
& $9$ & --- & 139.4 &
 152.0& 234.6 & \textbf{27.6} &466.9 & --- & --- & --- \\ \noalign{\vspace {.2cm}}
& $7$ & 1.7 & 1.7 &
 1.7 & 1.8 & 1.7 & 1.8 & 46.1 & 1.5 & \textbf{1.2} \\
QG2 & $8$ & 46.6& \textbf{9.6} &
10.6 &37.7 &11.7 &14.8 & --- & --- & --- \\
& $9$ & --- & 246.0 &
 \textbf{55.7}& 88.3 &119.7 &213.4 & --- & --- & --- \\ \noalign{\vspace {.2cm}}
& $7$ & 0.2 & 0.2 &
 0.2 & 0.3 & 0.2 & 0.3 & 3.2 & 1.0 & \textbf{0.0} \\
QG3 & $8$ & 0.4 & 0.4 &
 0.5 & 0.5 & 0.5 & 0.5 & 4.3 & 9.0 & \textbf{0.2} \\
& $9$ &10.2 &\textbf{7.4} &
 9.5 &16.5 &11.0 &12.8 & --- & --- & 18.2 \\ \noalign{\vspace {.2cm}}
& $7$ & 0.2 & 0.2 &
 0.2 & 0.3 & 0.3 & 0.3 & 2.8 & 0.7 & \textbf{0.1} \\
QG4 & $8$ & 0.5 & 0.6 &
 0.7 & 0.9 & 0.8 & 0.7 &27.9 &36.8 & \textbf{0.3} \\
& $9$ & 1.3 & \textbf{1.0} &
 2.1 &3.0 &1.1 &0.9 &442.1&288.8& 3.7 \\ \noalign{\vspace {.2cm}}
& $8$ & 0.4 & 0.4 &
 0.4 & 0.5 & 0.4 & 0.4 & 6.9 & 5.3 & \textbf{0.0} \\
& $9$ & 0.7 & 0.8 &
 0.8 & 0.9 & 0.8 & 0.8 &249.2& --- & \textbf{0.0} \\
QG5 & $10$& 1.6 & 1.5 &
 1.6 & 1.9 & 1.6 & 1.6 & --- & --- & \textbf{0.2} \\
& $11$& 2.1 & 2.2 &
 2.4 & 3.4 & 2.8 & 2.4 & --- & --- & \textbf{0.8} \\
& $12$&27.0 &\textbf{6.2}&
 9.1 &12.4 & 8.4 &10.4 & --- & --- & 16.4 \\ \noalign{\vspace {.2cm}}
& $8$ & 0.4 & 0.4 &
 0.5 & 0.5 & 0.5 & 0.4 & 0.8 & --- & \textbf{0.0} \\
& $9$ & 0.7 & 0.7 &
 0.8 & 0.9 & 0.8 & 0.8 & 1.2 & --- & \textbf{0.0} \\
QG6 & $10$& 1.2 & 1.4 &
 1.5 & 1.8 & 1.6 & 1.5 &10.5 & --- & \textbf{0.1} \\
& $11$& 2.7 & 2.8 &
 4.0 & 4.2 & 3.9 & 4.8 &125.5& --- & \textbf{1.2} \\
& $12$&32.0 &\textbf{12.9}&
25.6 &36.4 &25.7 &50.6 & --- & --- & 24.6 \\ \noalign{\vspace {.2cm}}
& $8$ & 0.4 & 0.4 &
 0.4 & 0.6 & 0.5 & 0.5 & 1.1 & --- & \textbf{0.1} \\
QG7 & $9$ & \textbf{0.7} & 1.0 &
 1.2 & 1.7 & 1.2 & 1.4 & 9.1 & --- & 0.9 \\
& $10$ & 6.7 & \textbf{3.2} &
 5.2 & 8.0 & 4.7 & 4.6 & --- & --- & 22.0 \\ \hline\hline
\end{tabular}
\vspace{-2\baselineskip}
\end{minipage}
\end{table}
All axioms have been modelled in \systemname{ezcsp} and \systemname{gecode} using constructive disjunction, and in $S$, $B_k$, $R_k$ and \systemname{clingcon} using integrity constraints.
Table \ref{tab:qep} demonstrates that both constructive disjunction and integrity constraints have a similar behaviour, as for \systemname{ezcsp} and \systemname{clingcon} on benchmark classes QG1 to QG4. On harder instances, 
conflict-driven learning appears to be too costly for \systemname{clingcon}. Additional experiments revealed that \systemname{clingcon} without learning performs like \systemname{ezcsp}. 
On the other hand, our decompositions benefit from learning constraint interdependencies, resulting in runtimes that outperform all other systems including \systemname{gecode} on the hardest problems.

\subsection{Graceful Graphs}

A labelling $f$ of the nodes of a graph $(V,E)$ is \emph{graceful} if $f$ assigns a unique label~$f(v)$ from $\{0,1,\dots,|E|\}$ to each node $v \in V$ such that, when each edge $(v,w) \in E$ is assigned the label $|f(v)-f(w)|$, the resulting edge labels are distinct. The problem of determining the existence of a graceful labelling of a graph (GGP) has been modelled as a CSP in \citeN{pesm03a}. Our experiments consider double-wheel graphs $DW_n$ composed by two copies of a cycle with $n$ vertices, each connected to a central hub.
\begin{table}
\caption{Runtime results in seconds for GGP.}
\label{tab:ggp}
\begin{minipage}{\textwidth}
\begin{tabular}{cccccccccc} \hline\hline
$DW_n$ & \encsup & \encbouh{1} & \encbouh{3} & \encbou & \encranh{3} & \encran & \systemname{ezcsp} & \systemname{clingcon} & \systemname{gecode} \\ \hline
$3$ & 11.4 &  3.8 &  5.7 &  8.7 &  6.0 & 10.4 & 6.5 & 66.9 & \textbf{1.8} \\
$4$ &  1.3 &  2.0 &  1.5 &  3.2 &  3.0 &  2.5 & 0.6 & \textbf{0.1} & \textbf{0.1} \\
$5$ &  4.5 &  5.0 &  4.5 & 13.5 & 12.5 & 31.4 & 1.0 & 2.0 & \textbf{0.1} \\
$6$ &  7.2 & 11.0 & 17.6 & 47.7 & 21.3 &110.2 & \textbf{1.2} & --- & 7.2 \\
$7$ & 23.8 & 28.3 & 67.9 &227.9 & 60.0 &432.9 & \textbf{18.0} & --- & --- \\
$8$ & 48.4 & 68.4 & ---  &207.8 & 58.4 &356.8 & \textbf{4.3} & --- & --- \\
$9$ & \textbf{82.8} &106.5 &200.4 &486.6 &227.4 & ---  & 390.5 & --- & --- \\ \hline\hline
\end{tabular}
\vspace{-2\baselineskip}
\end{minipage}
\end{table}
Table \ref{tab:ggp} shows that our encodings compete with \systemname{ezcsp} and outperform the other comparable systems, where the support encoding performs better than bound and range encodings. In most cases, the branching heuristic used in our approach appears to be misled by the extra variables introduced in $B_k$ and $R_k$. That explains some of the variability in the runtimes.

\section{Conclusions} \label{sec:con}

We have provided a new translation-based approach to incorporating Constraint Processing into Answer Set Programming. In particular, we investigated various generic ASP decompositions for constraints on finite domains and proved which level of consistency unit-pro\-pa\-ga\-tion achieves on them.
Our techniques were formulated as preprocessing and can be applied to any ASP system without changing its source code, which allows for programmers to select the solvers that best fit their needs. We have empirically evaluated their performance on benchmarks from CP and found them outperforming integrated Constraint Answer Set Programming systems as well as pure CP solvers.
As a key advantage of our novel approach we identified CDNL, exploiting constraint interdependencies which can improve propagation between constraints.
Future work concerns a comparison to \citeANP{niemela99a}'s encoding \cite{niemela99a,yoho04a}, and encodings of further global constraints useful in Constraint Answer Set Programming.
\paragraph{Acknowledgements.} We are grateful to Martin Gebser and Torsten Schaub for useful discussions on the subject of this paper.

\end{document}